\newtheorem{theorem}{Theorem}[section]
\newtheorem{prop}[theorem]{Proposition}
\newtheorem{cor}[theorem]{Corollary}
\newtheorem{lem}[theorem]{Lemma}
\newcommand{\RR}{\mathbb{R}}
\newcommand{\C}{\mathbb{C}}
\renewcommand{\th}{\text{th}}
\newcommand{\e}{\mathrm{e}}
\newcommand{\al}{\alpha}
\newcommand{\be}{\beta}
\newcommand{\om}{\omega}
\newcommand{\pa}[2]{\dfrac{\partial #1}{\partial #2}}
\newcommand{\lap}{\Delta}
\newcommand{\tot}{\text{tot}}
\newcommand{\DK}{\text{decay}}
\newcommand{\GS}{\text{GS}}
\newcommand{\ratio}{\text{ratio}}
\newcommand{\pnth}[1]{\left( #1 \right)}
\newcommand{\abs}[1]{\left| #1 \right|}
\title{Parametrizations of the Poisson-Schr\"{o}dinger Equations \\ in Spherical Symmetry}
\author{Alan R. Parry\footnote{Mathematics Department, Utah Valley University, MS 261, 800 W. University Parkway, Orem, UT, 84058-6703, email: alan.parry@uvu.edu}}
\date{\today}
\begin{document}

\maketitle

\begin{abstract}
We consider the asymptotically flat standing wave solutions to the Poisson-Schr\"{o}dinger system of equations known as static states. These solutions can be parameterized using a variety of choices of two continuous parameters and one discrete parameter, each having a useful physical-geometrical interpretation. The values of the discrete variable determines the number of nodes (zeros) in the solution. We use numerical inversion techniques to analyze transformations between various informative choices of parametrization by relating each of them to a standard set of three parameters.  Based on our computations, we propose explicit formulas for these relationships. Our computations also show that for the standard choice of continuous variables, the zero-node ground state yields a minimum value of a geometrically natural discrete variable. We give an explicit formula for this minimum value. We use these results to confirm two related observations from previous work by the author and others, and suggest additional applications and approaches to understand these phenomena analytically.
\end{abstract}


\section{Introduction}

The Poisson-Schr\"{o}dinger system of partial differential equations couples the Schr\"{o}dinger equation, which is often used to model the behavior of quantum particles, with the Poisson equation, which models the evolution of a gravitational potential (or really any potential energy field caused by a given density distribution).  It is also the low-field or Newtonian limit of the Einstein-Klein-Gordon system in general relativity \cites{Parry12-2,Goetz15,Giulini12}.  The system therefore has various motivations and applications, many of which stem from the desire to find particles that affect gravity.  This system, which also goes by the name Schr\"{o}dinger-Newton equations due to the Poisson equation being the fundamental equation of Newtonian gravity, is used in quantum mechanics in attempting to quantize gravity \cites{Meter11, Yang13, Bahrami14, Tod99, Moroz98, Ruffini69}.  In general relativity, it has been posed as a model for self-gravitating boson stars, which has been suggested as a model for dark matter \cites{Ruffini69, Bray10, Bray12, Parry12-2, Parry12-3, Goetz14, Goetz15, Lee09, Matos09, MSBS, Seidel90, Seidel98, Bernal08, Sin94, Mielke03, Sharma08, Ji94, Lee92, Lee96, Guzman01, Martinez15, Lopez19}.
Due to these frequently studied applications to this system, this system is very important mathematically and any increase in our current understanding of its solutions is warranted and helpful.  In this paper, we present numerically observed relationships between several parameters that can be used to define the standing wave solutions of the spherically symmetric Poisson-Schr\"{o}dinger equations.  While these relationships are of independent interest, one main hope of this paper is that knowing these relationships will make it easier to compute desired results about solutions to the Poisson-Schr\"{o}dinger equations without necessarily having to numerically compute the solutions from scratch.  Moreover, some of these relationships suggest a method to constrain the value of $m$ for the Poisson-Schr\"{o}dinger model of dark matter.  These numerical observations are an extension of the observations made in the papers \cites{Parry12-3,Goetz15} that investigated some of these relationships between certain parameters of this system.

\subsection{Solutions to the Poisson-Schr\"{o}dinger System}

Given a scalar field $f:\RR^{4} \to \C$, the Poisson-Schr\"{o}dinger equations are the following
\begin{subequations}\label{PSeq}
  \begin{align}
    \label{PSeq-1} \lap_{\RR^{3}}V &= 4\pi (2\abs{f}^{2}) \\
    \label{PSeq-2} i\pa{f}{t} &= -\frac{1}{2m}\lap_{\RR^{3}}f + mVf,
  \end{align}
\end{subequations}
where $V(t,x,y,z)$ is a real valued function and $m$ is a positive constant.  The $\lap_{\RR^{3}}V$ refers to the Laplacian over the $x,y,z$ terms.  Note that we have written the energy density term as $2\abs{f}^{2}$.  This is for consistency with the representation of these equations in the author's previous work.  As an ansatz which helps to find convenient spherically symmetric standing wave solutions, we utilize
\begin{equation}\label{scalarfield}
  f(t,r) = \e^{i(m - \om)t}F(r),
\end{equation}
where $\om$ is a positive constant and $F$ is a real-valued function.  Plugging this into the system~\eqref{PSeq} yields the time-independent Poisson-Schr\"{o}dinger equations
\begin{subequations}\label{SS-PSeq-E}
  \begin{align}
    \label{SS-PSeq-E1} \lap_{\RR^{3}}V &= 4\pi(2F^2) \\
    \label{SS-PSeq-E2} \frac{1}{2m}\lap_{\RR^{3}}F &= (m - \om)F + mVF.
  \end{align}
\end{subequations}
Noting the commonly known fact that in spherical symmetry $\lap_{\RR^{3}}f = f_{rr} + 2f_{r}/r$ and the Newtonian gravity relationship between the mass function $M(r)$ and its corresponding potential function $V(r)$, namely that $V_{r} = M/r^{2}$, the system~\eqref{SS-PSeq-E} becomes
\begin{subequations}\label{SS-PSeq}
  \begin{align}
    \label{SS-PSeq-1} M_{r} &= 8\pi r^{2} F^{2} \\
    \label{SS-PSeq-2} V_{r} &= \frac{M}{r^{2}} \\
    \label{SS-PSeq-3} \frac{1}{2m}\pnth{F_{rr} + \frac{2F_{r}}{r}} &= (m-\om)F + mVF.
  \end{align}
\end{subequations}
Note that these equations are exactly the same as the Newtonian limit of the spherically symmetric Einstein-Klein-Gordon system of equations for which the scalar field in that system is of the form $f(t,r) = \exp(i\om t) F(r)$ \cites{Parry12-2,Goetz15}.  Some useful properties of solutions to this system of ODEs is the topic of this paper.  To get started, we include some general facts about these solutions necessary to our discussion.

First, it is well known that under the boundary conditions that $M \to M_{\tot}$, a constant, (and hence $F \to 0$) and $V \to 0$ as $r \to \infty$, this system produces solutions which are called standing wave solutions or static states where each type of solution can be classified by the number of zeros or nodes $n$ it has \cites{Goetz15,Harrison03,Tod99,Moroz98,Seidel90,Seidel98}.  These solutions oscillate initially, but eventually shift into exponential decay.

Moreover, since $M$ is a mass function and all of the functions are spherically symmetric and smooth, we automatically have that
\begin{align}
  F_{r}(0) &= V_{r}(0) = M_{r}(0)  = 0 &  M(0) &= 0
\end{align}
This leaves the values of $V(0)$, $F(0)$, $m$, $\om$, and $n$ free except for being subject to the constraints $V \to 0$ and $M \to M_{\tot}$ as $r \to \infty$.  It is noted in \cite{Goetz14} that these boundary conditions also require that $0<\om<m$.  Also for the same reasons given in \cite{Goetz14}, we require that $F(0)>0$ and $V(0)<0$.  Following the shooting procedure outlined in \cite{Goetz14}, given a choice of $m$, $\om$, and $n$, we will choose $V(0)$ and $F(0)$ to obtain $n$ zeros and then satisfy $V\to 0$ and $M \to M_{\tot}$ as $r \to \infty$.

It is well-known that we can scale any given solution to the system~\eqref{SS-PSeq} as follows for any $\al,\be >0$
\begin{align}\label{scaling}
    \notag \bar{r} &= \al^{-1}\be^{-1}r & \bar{m} &= \be^{2}m & (\bar{m} - \bar{\om}) &= \al^{2}(m - \om) \\
    \bar{F} &= \al^{2}F & \bar{M} &= \al\be^{-3}M & \bar{V} &= \al^{2}\be^{-2}V
\end{align}
and obtain another solution \cite{Goetz15}.  Thus if we start with a ground state ($n=0$) for specified values of $m$ and $\om$, we can obtain any other ground state for any other values of $m$ and $\om$ restricted as above by using the scalings in equation~\eqref{scaling}.  In preliminary work for \cite{Goetz15}, Goetz computed the ground state through $800^{\th}$ excited state for the values of $m = 100$ and $\om = 99.999$.  This yields a set of solutions that can be scaled as above to yield any solution desired.

Setting up the solutions this way allows for many different ways of parameterizing the solutions.  To determine a solution, one simply needs two conditions which will fix an $\al$ and $\be$ above and then some condition to fix which state, whether ground, first excited, second excited, etc., one desires.  Since all of the values of $r$, $m$, $\om$, etc. on the right hand sides of the scalings in equation~\eqref{scaling} are the values of the set of static states already computed, any choice of two of the values of parameters of these types will fix an $\al$ and $\be$.  Any third parameter can be used to set which type of static state the solutions is, but it cannot be freely chosen.  Instead it will have a value that corresponds to a ground state, one that corresponds to a first excited state, and so on.  That is, there is a discrete set of values of the third parameter that yields a static state.  Thus any three parameters can be used to completely parametrize all static state solutions to the Poisson-Schr\"{o}dinger equations.  Any two of these three can be considered continuous parameters, while the third is then discrete.

\subsection{Parameters Used}\label{par}

We have created a list of useful parameters that can be used for the above process and that have some useful physical interpretation.  Obvious choices include the mass term $m$, the total mass $M_{\tot}$, the value of $\om$, and the number of nodes $n$.  We include several other parameters for a total of 12 parameters and for convenience have listed all of these parameters and their definitions below.
\begin{description}
  \item[[$m$\!\!]] The mass term in the Poisson-Schr\"{o}dinger equations.
  \item[[$M_{\tot}$\!\!]] The total mass of the solution, i.e. $\displaystyle M_{\tot} := \lim_{r \to \infty} M(r)$.
  \item[[$R_{99}$\!\!]] The value of the radius $r$ which contains $99\%$ of the mass.
  \item[[$n$\!\!]] The number of nodes of the static state.
  \item[[$\om$\!\!]] The frequency of the scalar field $F(r)$.
  \item[[$V_{0}$\!\!]] The initial value of the potential function $V(r)$.
  \item[[$F_{0}$\!\!]] The initial value of the scalar field $F(r)$.
  \item[[$R_{h}$\!\!]] The value of $r$ which contains $50\%$ of the mass (i.e. $M(R_{h}) = M_{\tot}/2$).
  \item[[$R_{\DK}$\!\!]] The value of $r$ where the scalar field $F(r)$ shifts from oscillatory behavior to exponential decay behavior.  We call this the decay radius.
  \item[[$M_{\DK}$\!\!]] The mass contained within $r=R_{\DK}$.
  \item[[\! $\abs{F_{\DK}}$\!\!]] The value of $\abs{F(r)}$ at $r=R_{\DK}$.
  \item[[$v_{\max}$\!\!]] The maximum circular orbital velocity of the spacetime solution.
\end{description}
In this paper, we seek to determine how these different parameters are related to each other.  We do this by using three parameters to define the states as described above and then determining how the other nine parameters depend on these three.  Once these relationships are determined, the resulting equations can be used to determine the dependency of any of these parameters on any other.  As a matter of interest, understanding the dependency of each of the defining parameters on the other two turns out to be particularly useful as well.  We describe our specific methods of computing these relationships in section~\ref{methods}.  In section~\ref{results}, we detail the resulting equations.  Finally, in section~\ref{sumoutlook}, we give a summary of all of the equations computed in section~\ref{results} and describe an application where this information might be useful.

\section{Methods}\label{methods}

In this section, we detail how we computed the relationships we will present in section~\ref{results}.  As noted earlier, we choose three parameters as our ``defining'' parameters of the static states and relate the other parameters to these three.  The three defining parameters we choose are $m$, $M_{\tot}$, and $R_{99}$.  We make this choice because they are, in our opinion, the most important physical parameters.  From a quantum physics perspective, the mass term $m$ is the mass of the particle described by the Schr\"{o}dinger equation, $M_{\tot}$ is the total mass of the system, and $R_{99}$ is effectively the total radius of the system.  From a mathematical point of view, it essentially does not matter which three we choose since any of the other relationships can be found by substituting some of the resulting relationships into the others.

We divide the remainder of this section in two.  The first part describes the method we use to determine the dependencies of the other nine parameters on the three parameters we chose as defining parameters, $m$, $M_{\tot}$ and $R_{99}$.  The second part describes, given values for two of the defining parameters, how to compute the minimum value of the third.

\subsection{Parameter Dependencies on Defining Parameters}

The scalings in equation~\eqref{scaling} greatly simplify the task of computing the dependencies of the parameters listed in section~\ref{par} on $m$, $M_{\tot}$, and $R_{99}$.  We will show in the remainder that all of the quantities in section~\ref{par} or closely related ones will be proportional to the product of power functions of these three parameters.  That is, we show that for a quantity $Q$ from section~\ref{par} (or a related quantity), we have that approximately
\begin{equation}\label{rawpower}
  Q = C m^{p}(M_{\tot})^{q}(R_{99})^{s}
\end{equation}
for some constant $C$.  The scalings in equation~\eqref{scaling} restrict what the exponents $p$, $q$, and $s$ can be.  Specifically, if, under the scalings in equation~\eqref{scaling}, the quantity $Q$ scales as
\begin{equation}\label{parscale}
  \bar{Q} = \al^{k}\be^{\ell}Q,
\end{equation}
then equation~\eqref{rawpower} scales as
\begin{align}
  \notag \bar{Q} &= C \bar{m}^{p}(\bar{M}_{\tot})^{q}(\bar{R}_{99})^{s} \\
  \notag \al^{k} \be^{\ell}Q &= C (\be^{2}m)^{p} (\al \be^{-3} M_{\tot})^{q}(\al^{-1}\be^{-1}R_{99})^{s} \\
   &= C\al^{q-s}\be^{2p-3q-s}m^{p}(M_{\tot})^{q}(R_{99})^{s}.
\end{align}
Since any relation of the form in equation~\eqref{rawpower} must preserve the scalings in equation~\eqref{scaling}, the above equation requires that
\begin{subequations}
\begin{align}
  k &= q-s \\
  \ell &= 2p-3q-s.
\end{align}
\end{subequations}
Solving for $p$ and $q$ yields,
\begin{subequations}
\begin{align}
  q &= k + s \\
  p &= \frac{\ell + 3k}{2} + 2s.
\end{align}
\end{subequations}
Plugging this equation in equation~\eqref{rawpower} yields
\begin{equation}\label{refpower}
  Q = C m^{(\ell + 3k)/2}(M_{\tot})^{k}\pnth{m^{2}M_{\tot}R_{99}}^{s}.
\end{equation}
We have essentially proven the following proposition.

\begin{prop}\label{Qscale1}
  If under the scalings in equation~\eqref{scaling}, a quantity $Q$ scales as
   \begin{equation}
     \bar{Q} = \al^{k}\be^{\ell}Q,
   \end{equation}
  then the quantity
   \begin{equation}\label{scaleinv}
     \frac{Q}{m^{(\ell + 3k)/2}(M_{\tot})^{k}(m^{2}M_{\tot}R_{99})^{s}}
   \end{equation}
  is scale invariant under the scalings in equation~\eqref{scaling}.
\end{prop}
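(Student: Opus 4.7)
The plan is to verify the claim by a direct computation of how the denominator in~\eqref{scaleinv} transforms under the scalings in~\eqref{scaling}; the bulk of this verification has already been carried out in the derivation leading up to the proposition, and the statement is really a clean repackaging of that computation.

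First I would record the scaling behavior of the three defining parameters. Because $R_{99}$ is a specific value of the radial coordinate $r$ and $M_{\tot}$ is the limiting value of $M$ as $r \to \infty$, the scalings in~\eqref{scaling} give $\bar{R}_{99} = \al^{-1}\be^{-1}R_{99}$, $\bar{M}_{\tot} = \al\be^{-3}M_{\tot}$, and $\bar m = \be^{2}m$.

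Next I would scale the three factors in the denominator separately. The factor $m^{(\l+3k)/2}$ picks up $\be^{\,\l+3k}$, the factor $(M_{\tot})^{k}$ picks up $\al^{k}\be^{-3k}$, and the product $m^{2}M_{\tot}R_{99}$ picks up $\be^{4}\cdot\al\be^{-3}\cdot\al^{-1}\be^{-1} = 1$, so that $(m^{2}M_{\tot}R_{99})^{s}$ is unchanged for every $s$. Assembling these contributions, the denominator transforms by an overall factor of $\al^{k}\be^{\l}$, which is exactly the scaling of $Q$ assumed in the hypothesis. Therefore the ratio in~\eqref{scaleinv} is invariant.

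The only conceptual point worth flagging---and the reason a free parameter $s$ appears in the proposition---is the scale invariance of the combination $m^{2}M_{\tot}R_{99}$: it produces a one-parameter family of admissible rescalings of the base power law in~\eqref{rawpower}, so no additional constraint on $s$ arises. There is no genuine obstacle in the argument; the only care required is an accurate accounting of the $\al$ and $\be$ exponents contributed by each factor, which the preceding derivation already performs.
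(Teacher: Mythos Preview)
Your proposal is correct and is essentially the same computation the paper carries out in the lines immediately preceding the proposition (which the paper explicitly cites as the proof by writing ``We have essentially proven the following proposition''). The only cosmetic difference is that the paper \emph{derives} the exponents $p=(\ell+3k)/2+2s$ and $q=k+s$ by imposing scale invariance on the general ansatz~\eqref{rawpower}, whereas you take those exponents as given in the statement and \emph{verify} directly that the denominator scales by $\al^{k}\be^{\ell}$; both amount to the same exponent bookkeeping, and your observation that $m^{2}M_{\tot}R_{99}$ is itself scale invariant is exactly what the paper records separately as Lemma~\ref{defparlem}.
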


This proposition means that, for any value of $s$, the quantity in equation~\eqref{scaleinv} is constant for all solutions of a given state (i.e. a given value of $n$).  The constant itself depends on $n$.  We show, however, that for each of the quantities in section~\ref{par}, there is a value of $s$  such that  the quantity in equation ~\eqref{scaleinv} is appoximately constant across all values of $n$.  To do this, we first note that under the scalings in equation~\eqref{scaling}, the quantity $m^{2}M_{\tot}R_{99}$ satisfies
\begin{equation}
  \bar{m}^{2}\bar{M}_{\tot}\bar{R}_{99} = \be^{4}m^{2}\al\be^{-3}M_{\tot}\al^{-1}\be^{-1}R_{99} = m^{2}M_{\tot}R_{99}
\end{equation}
and hence we have the following lemma.

\begin{lem}\label{defparlem}
  Under the scalings in equation~\eqref{scaling}, the quantity $m^{2}M_{\tot}R_{99}$ is scale invariant.  
\end{lem}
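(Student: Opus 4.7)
The plan is to verify the scale invariance by direct computation, using the scalings in equation~\eqref{scaling} together with the induced scalings for the derived quantities $M_{\tot}$ and $R_{99}$. Before combining anything, I first need to determine how $M_{\tot}$ and $R_{99}$ transform under \eqref{scaling}, since these are not among the primitive scaled quantities listed there but are derived from $M$ and $r$.

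For $M_{\tot}$, I would observe that by definition $M_{\tot} = \lim_{r \to \infty} M(r)$, and since the scaling $\bar{r} = \al^{-1}\be^{-1}r$ preserves the limit $r \to \infty$, the scaled total mass inherits exactly the scaling of $M$; that is, $\bar{M}_{\tot} = \al\be^{-3}M_{\tot}$. For $R_{99}$, I would use its defining equation $M(R_{99}) = 0.99\,M_{\tot}$. Writing the analogous equation $\bar{M}(\bar{R}_{99}) = 0.99\,\bar{M}_{\tot}$ for the barred solution, and using the fact that $\bar{M}(\bar{r}) = \al\be^{-3} M(r)$ with $r = \al\be\,\bar{r}$, the defining condition becomes $M(\al\be\,\bar{R}_{99}) = 0.99\,M_{\tot}$. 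Comparing with the unbarred definition forces $\al\be\,\bar{R}_{99} = R_{99}$, i.e.\ $\bar{R}_{99} = \al^{-1}\be^{-1} R_{99}$, so $R_{99}$ scales like a radius, as expected.

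With these two derived scalings in hand, the final step is a one-line computation: multiply $\bar{m}^{2} = \be^{4}m^{2}$, $\bar{M}_{\tot} = \al\be^{-3}M_{\tot}$, and $\bar{R}_{99} = \al^{-1}\be^{-1}R_{99}$, and check that the exponents of $\al$ and $\be$ collapse to zero, leaving $\bar{m}^{2}\bar{M}_{\tot}\bar{R}_{99} = m^{2}M_{\tot}R_{99}$.

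The only real obstacle is confirming the derived scaling of $R_{99}$; there is a small risk of confusing the scaled function $\bar{M}$ with its evaluation at a scaled argument $\bar{r}$. To guard against this I would be careful to distinguish the function-level scaling from the point-level substitution, as above. Once that subtlety is cleanly handled, the lemma follows immediately from the exponent arithmetic.
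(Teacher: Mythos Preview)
Your proposal is correct and follows essentially the same approach as the paper: the paper's proof is the single-line computation $\bar{m}^{2}\bar{M}_{\tot}\bar{R}_{99} = \be^{4}m^{2}\,\al\be^{-3}M_{\tot}\,\al^{-1}\be^{-1}R_{99} = m^{2}M_{\tot}R_{99}$, taking the scalings of $M_{\tot}$ and $R_{99}$ as inherited from $M$ and $r$ without further comment. Your additional care in deriving those induced scalings from the definitions of $M_{\tot}$ and $R_{99}$ is a welcome refinement, but the core argument is identical.
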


Proposition~\ref{Qscale1} and Lemma~\ref{defparlem} then immediately imply the following corollary.

\begin{cor}\label{Qscale2}
   If under the scalings in equation~\eqref{scaling}, a quantity $Q$ scales as
   \begin{equation}
     \bar{Q} = \al^{k}\be^{\ell}Q,
   \end{equation}
  then the quantity
   \begin{equation}
     \frac{Q}{m^{(\ell + 3k)/2}(M_{\tot})^{k}}
   \end{equation}
  is scale invariant under the scalings in equation~\eqref{scaling}.
\end{cor}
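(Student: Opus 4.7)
The plan is to observe that the corollary is essentially a repackaging of Proposition~\ref{Qscale1} combined with Lemma~\ref{defparlem}; the content is that the explicit factor $(m^{2}M_{\tot}R_{99})^{s}$ appearing in \eqref{scaleinv} is itself scale invariant, so removing it does not affect scale invariance.

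Concretely, I would proceed as follows. First, I would invoke Proposition~\ref{Qscale1}, which tells us that for the given $k$ and $\ell$ and for any real $s$,
\begin{equation*}
  \frac{Q}{m^{(\ell + 3k)/2}(M_{\tot})^{k}(m^{2}M_{\tot}R_{99})^{s}}
\end{equation*}
transforms to itself under the scalings \eqref{scaling}. The cleanest choice is $s=0$, which immediately yields that $Q/\bigl(m^{(\ell+3k)/2}(M_{\tot})^{k}\bigr)$ is scale invariant; this already gives the conclusion. Alternatively, to make the role of Lemma~\ref{defparlem} explicit, I would take the expression from Proposition~\ref{Qscale1} for a general $s$, multiply it by $(m^{2}M_{\tot}R_{99})^{s}$, and note that this factor is scale invariant by Lemma~\ref{defparlem}; the product of two scale-invariant quantities is scale invariant, and the product equals $Q/\bigl(m^{(\ell+3k)/2}(M_{\tot})^{k}\bigr)$, yielding the corollary.

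There is no genuine obstacle here: once Proposition~\ref{Qscale1} has been established via the substitution computation shown in the text leading up to \eqref{refpower}, and once the one-line verification in Lemma~\ref{defparlem} is in hand, the corollary follows in a single line. The only care needed is to be explicit about the fact that scale invariance is preserved under multiplication and division by scale-invariant factors, which is immediate from $\bar{A}\bar{B} = (\al^{a}\be^{b}A)(\al^{c}\be^{d}B) = \al^{a+c}\be^{b+d}AB$ applied with the exponents of one of the factors set to zero.
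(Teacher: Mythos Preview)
Your proposal is correct and matches the paper's approach exactly: the paper simply states that Proposition~\ref{Qscale1} and Lemma~\ref{defparlem} immediately imply the corollary, which is precisely what you argue (either by taking $s=0$ in \eqref{scaleinv} or by multiplying back the scale-invariant factor $(m^{2}M_{\tot}R_{99})^{s}$).
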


While $k$ and $\ell$ will be known for any given quantity $Q$ from section~\ref{par}, the value of $s$, if it exists, that makes equation~\eqref{scaleinv} constant across all $n$ must be computed numerically.  To determine the power function relationship, if any, between $Q$ and the quantity $m^{2}M_{\tot}R_{99}$, we need to make a log-log plot of 
\begin{equation*}
  \frac{Q}{m^{(\ell + 3k)/2}(M_{\tot})^{k}} \quad \text{against} \quad m^{2}M_{\tot}R_{99}.
\end{equation*}
Since both of these quantities are scale invariant, there is only one data point for each ground or excited state.  Thus, to acquire data points for this plot, we simply need one example of each state.  We use the states computed by Goetz, specifically from the ground to the $800^{\th}$ excited state.  To keep things defined in terms of the defining parameters we chose, we will hold two of the parameters constant (these automatically become the choices for the continuous defining parameters) and vary the third along the values which induce the ground through $800^{\th}$ excited state.  We then compute the resulting log-log plot and its linear regression.  The slope of this best fitting line is the number $s$.  The value of $s$ computed this way is very close to a common rational number in almost every case.  This suggests some relatively simple relationships between these parameters, which we will write down as approximations.

In section~\ref{results}, we compute the best fitting line for each of these log-log plots as well as the correlation coefficient and collect this information in a table for each parameter.  Additionally, these correlations seem to get stronger the higher the excited state, and as such, we will separate these correlations into three regimes: (1) the ground through $20^{\th}$ excited state, (2) the $20^{\th}$ to $100^{\th}$ excited state, and (3) the $100^{\th}$ to $800^{\th}$ excited state.  

As a matter of procedural description, in every case we will hold a $m$ and $M_{\tot}$ constant and vary $R_{99}$.  For the constant values we use the values (all measured in normalized units).
\begin{align}
  m &= 100, &  M_{\tot} &= 0.02.
\end{align}
This is only included for informational purposes.  It really makes no difference what values are used above, the same relationships described in section~\ref{results} will be observed since we are using scale invariant quantities.

\subsection{Minimum Values of Defining Parameters}\label{minvals}

Here we outline the procedure for determining the dependency of the minimum value of each of the parameters $M_{\tot}$, $R_{99}$, and $m$ on the remaining two of these values.  By plotting the values of any one of these three parameters against the number of nodes of the state for constant values of the other two parameters, we see that the minimum value in each case is obtained by the ground state value of that parameter (see Figure~\ref{MinVal}).  This will be the case no matter the choice of value of the other two parameters since the static state solutions are just scalings of a single set.  Thus to accomplish our task of computing the dependency of the minimum value of each of the parameters $M_{\tot}$, $R_{99}$, and $m$ on the remaining two of these values, we simply need to compute the dependency of the ground state value of each of these parameters.  This computation is a relatively straightforward consequence of the scalings in equation~\eqref{scaling} and requires no numerical calculations at all.  For organizational purposes, the result is in section~\ref{mGSsec}.

\begin{figure}[t]

\begin{center}
  \includegraphics[width = 3 in]{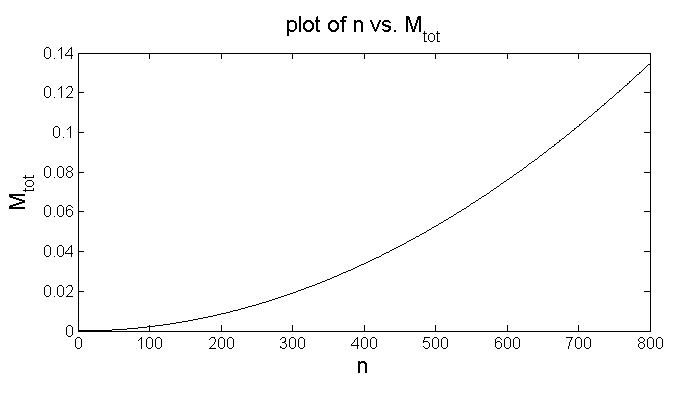}
  \includegraphics[width = 3 in]{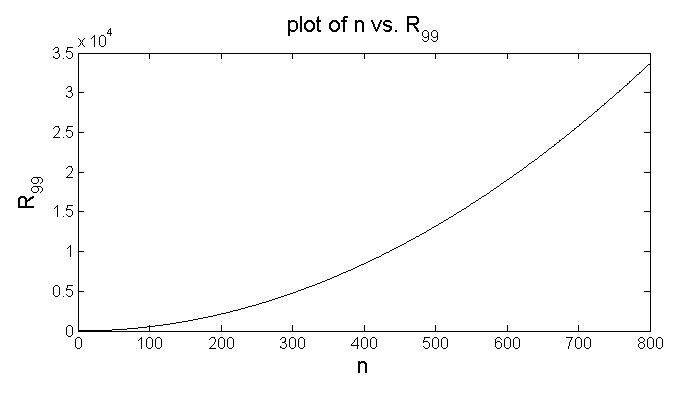}
  \includegraphics[width = 3 in]{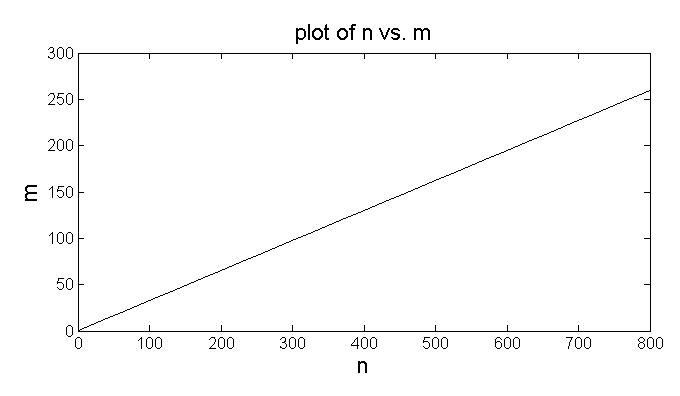}
\end{center}

\caption{Plots of the defining parameters, $M_{\tot}$, $R_{99}$, and $m$, versus the number of nodes $n$.  In the plots, the indicated defining parameter varies while the other two are held constant.  From these, it is clear that the minimum value of each of these parameters, given a fixed value for the other two, occurs for the ground state ($n=0$).  Technically, these plots are discrete as the value of the varying parameter is only defined for each positive integer $n$.}

\label{MinVal}

\end{figure}

\section{Results}\label{results}

In this section, we detail all of the results of the computations described in section~\ref{methods}.  In the first nine subsections, we show how the nine remaining parameters depend on the three parameters we selected as defining parameters, namely, $m$, $M_{\tot}$, and $R_{99}$.  In the remaining section, we show the dependency of the ground state value of each of the defining parameters on the choices of the values of  the other two defining parameters.  This value represents the minimum value that parameter can take given the chosen values of the other two parameters.  All constants in all ten sections have been rounded to four significant digits.

\subsection{Number of Nodes}

\begin{figure}[t]

\begin{center}
  \includegraphics[width = \textwidth]{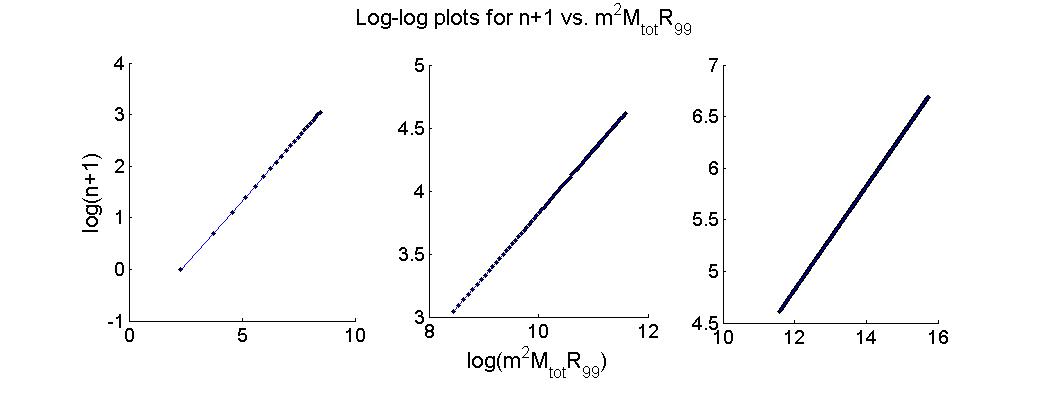}
\end{center}

\caption{Log-log plots of $n+1$ against the quantity $m^{2}M_{\tot}R_{99}$.  The scatter plot in each plot is the actual data computed for the static states, while the line is the best fitting line for the data.  The left plot is for the ground through $20^{\th}$ excited state regime, the center plot for the $20^{\th}$ through $100^{\th}$ excited state, and the right plot for the $100^{\th}$ through $800^{\th}$. The slope and intercept for these lines as well as the correlation coefficient for the data can be found in Table~\ref{statsN}.}

\label{plotsN}

\end{figure}

\begin{table}[ht]

\begin{center}
  \begin{tabular}{c||c|c|c}
    State Range & (0 - 20) & (20 - 100) & (100 - 800)  \\
    \hline Slope ($s$) & 0.4974 & 0.5013 & 0.5004  \\
     Intercept & -1.164  & -1.192 & -1.181  \\
     Exponentiated Intercept ($C$) & 0.3121 & 0.3036 & 0.3070  \\
     Correlation Coefficient & 1.000 & 1.000 & 1.000
  \end{tabular}
\end{center}

\caption{The slope, intercept, and correlation coefficient of the log-log plot of the quantity $n+1$ against the quantity $m^{2}M_{\tot}R_{99}$.  The log-log plots these values come from are found in Figure~\ref{plotsN}.}

\label{statsN}

\end{table}

Due to the fact that the first value of $n$ is $n=0$ denoting the ground state, computing a log-log plot for this parameter would be invalid for this first value.  Thus we consider instead the value $n+1$.  The value of $n+1$ is dimensionless and so will not scale via the scalings in equation~\eqref{scaling}.  This implies that for this parameter $k = \ell = 0$ in equation~\eqref{parscale}.  Thus if $n+1$ is a power function of the defining parameters at all, we must have
\begin{equation}
  n+1 = C \pnth{m^{2}M_{\tot}R_{99}}^{s}
\end{equation}
for some constants $C$ and $s$.  In Table~\ref{statsN}, we detail the numerical observations we made from the log-log plots of $n+1$ versus $\pnth{m^{2}M_{\tot}R_{99}}$ found in Figure~\ref{plotsN}. The best fitting line of this log-log plot suggests that the value of $s$ in this case is approximately $1/2$ and hence it is approximately true that
\begin{equation}\label{nCorr}
  n = C m \sqrt{M_{\tot}R_{99}} - 1.
\end{equation}
where the constant $C$ is the exponentiated intercept given in Table~\ref{statsN} for each of the different regimes.  Note that for large values of $n$, where the $-1$ is less significant, equation~\eqref{nCorr} implies that approximately
\begin{equation}
  n^{2} \propto m^{2}M_{\tot}R_{99},
\end{equation}
which says the same thing as~\eqref{nCorr}, but is written in a more convenient form.

\subsection{Scalar Field Frequency}

\begin{figure}[t]

\begin{center}
  \includegraphics[width = \textwidth]{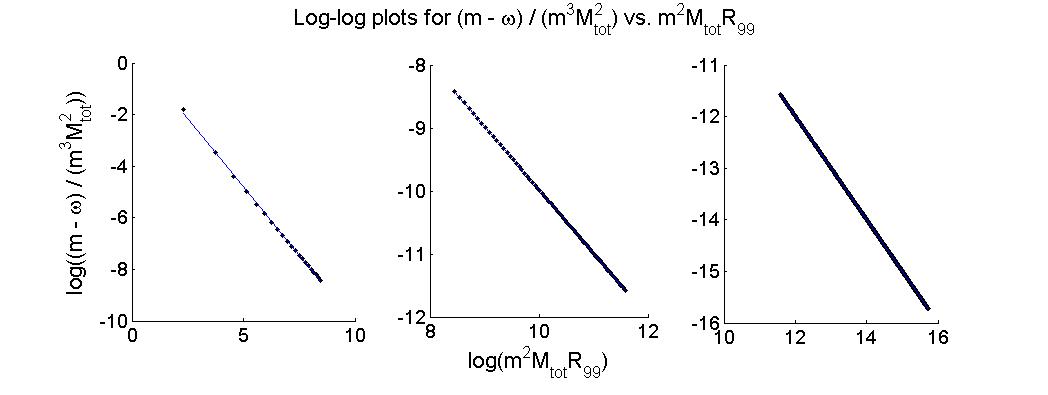}
\end{center}

\caption{Log-log plots of the quantity $(m-\om)/\pnth{m^{3}(M_{\tot})^{2}}$ against the quantity $m^{2}M_{\tot}R_{99}$.  The scatter plot in each plot is the actual data computed for the static states, while the line is the best fitting line for the data.  The left plot is for the ground through $20^{\th}$ excited state regime, the center plot for the $20^{\th}$ through $100^{\th}$ excited state, and the right plot for the $100^{\th}$ through $800^{\th}$.  The slope and intercept for these lines as well as the correlation coefficient for the data can be found in Table~\ref{statsOm}.}

\label{plotsOm}

\end{figure}

\begin{table}[ht]

\begin{center}
  \begin{tabular}{c||c|c|c}
    State Range & (0 - 20) & (20 - 100) & (100 - 800)  \\
    \hline Slope ($s$) & -1.059 & -1.008 & -1.002 \\
     Intercept & 0.4940  & 0.09659 & 0.02435  \\
     Exponentiated Intercept ($C$) & 1.639 & 1.101  & 1.025 \\
     Correlation Coefficient & -0.9997 & -1.000 & -1.000
  \end{tabular}
\end{center}

\caption{The slope, intercept, and correlation coefficient of the log-log plot of the quantity $(m-\om)/\pnth{m^{3}(M_{\tot})^{2}}$ against the quantity $m^{2}M_{\tot}R_{99}$.    The log-log plots these values come from are found in Figure~\ref{plotsOm}.}

\label{statsOm}

\end{table}

The quantity here that appears numerically to satisfy equation~\eqref{rawpower} is $m-\om$ rather than $\om$.  By equation~\eqref{scaling}, $m-\om$ scales such that $k = 2$ and $\ell =0$ in equation~\eqref{parscale}.  Thus we have from equation~\eqref{refpower} that, to be invariant under the scalings, $m-\om$ must satisfy
\begin{equation}
  m-\om = C m^{3/2}(M_{\tot})^{2}\pnth{m^{2}M_{\tot}R_{99}}^{s}
\end{equation}
for some constants $C$ and $s$.  In Table~\ref{statsOm}, we collect the slope and intercept of the best fit line to the log-log plot of $(m-\om)/\pnth{m^{3}(M_{\tot})^{2}}$ versus $m^{2}M_{\tot}R_{99}$.  The plots themselves can be found in Figure~\ref{plotsOm}.  The table suggests that $s \approx -1$ and hence it is approximately true that
\begin{align}
  \notag m - \om &= \frac{CM_{\tot}m}{R_{99}} \\
  \om &= m\pnth{1 - \frac{CM_{\tot}}{R_{99}}}
\end{align}
where the constant $C$ is the exponentiated intercept in Table~\ref{statsOm} for each of the different regimes.

\subsection{Initial Value of the Potential}

\begin{figure}[t]

\begin{center}
  \includegraphics[width = \textwidth]{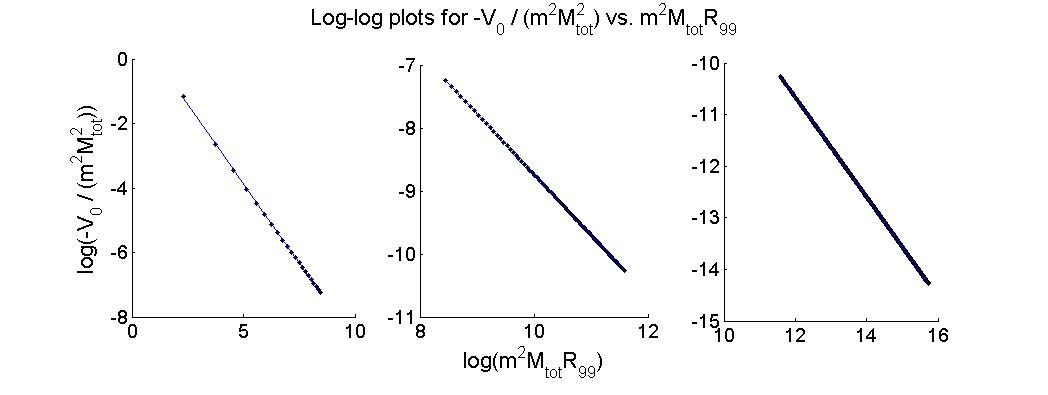}
\end{center}

\caption{Log-log plots of the quantity $-V_{0}/\pnth{m^{2}(M_{\tot})^{2}}$ against the quantity $m^{2}M_{\tot}R_{99}$.  The scatter plot in each plot is the actual data computed for the static states, while the line is the best fitting line for the data.  The left plot is for the ground through $20^{\th}$ excited state regime, the center plot for the $20^{\th}$ through $100^{\th}$ excited state, and the right plot for the $100^{\th}$ through $800^{\th}$.  The slope and intercept for these lines as well as the correlation coefficient for the data can be found in Table~\ref{statsV0}.}

\label{plotsV0}

\end{figure}

\begin{table}[ht]

\begin{center}
  \begin{tabular}{c||c|c|c}
    State Range & (0 - 20) & (20 - 100) & (100 - 800)  \\
    \hline Slope ($s$) & -0.9832 & -0.9633 & -0.9671  \\
     Intercept & 1.047 & 0.8960 & 0.9419  \\
     Exponentiated Intercept ($-C$) & 2.850 & 2.450  & 2.565 \\
     Correlation Coefficient & -0.9999 & -1.000 & -1.000
  \end{tabular}
\end{center}

\caption{The slope, intercept, and correlation coefficient of the log-log plot of the quantity $-V_{0}/\pnth{m^{2}(M_{\tot})^{2}}$ against the quantity $m^{2}M_{\tot}R_{99}$.    The log-log plots these values come from are found in Figure~\ref{plotsV0}.}

\label{statsV0}

\end{table}

From equation~\eqref{scaling}, the initial value of the potential $V_{0}$ scales such that the $k$ and $\ell$ parameters in equation~\eqref{parscale} have the values $k=2$ and $\ell = -2$.  Thus via equation~\eqref{refpower}, if $V_{0}$ is a power function of $m^{2}$, $M_{\tot}$, and $R_{99}$, it must be that
\begin{equation}\label{rawV0}
  V_{0} = C m^{2}(M_{\tot})^{2}\pnth{m^{2}M_{\tot}R_{99}}^{s}
\end{equation}
for some constants $C$ and $s$.  Hence we wish to compute the best fitting line of the log-log plot of $V_{0}/\pnth{m^{2}(M_{\tot})^{2}}$ versus $m^{2}M_{\tot}R_{99}$.  However since $V_{0}<0$, in order to take a logarithm of this value, we have to instead consider its negative.  Thus we actually use the log-log plot of $-V_{0}/\pnth{m^{2}(M_{\tot})^{2}}$ versus $m^{2}M_{\tot}R_{99}$.  The plots for each regime can be found in Figure~\ref{plotsV0}.   The slope and intercept of the best fit lines are collected in Table~\ref{statsV0}.  The values of these parameters suggest that $s \approx -1$ and hence equation~\eqref{rawV0} becomes
\begin{equation}
  V_{0} = \frac{CM_{\tot}}{R_{99}},
\end{equation}
where the value of $C$ is the negative of the exponentiated intercept in Table~\ref{statsV0}.  That is,
\begin{align}
  \notag C_{0-20} &= -\exp(1.047) = -2.850 \\
  \notag C_{20-100} &= -\exp(0.8960) = -2.450 \\
  C_{100-800} &= -\exp(0.9419) = -2.565
\end{align}
where the subscript denotes for which states the constant is valid.

\subsection{Initial Value of the Scalar Field}

\begin{figure}[t]

\begin{center}
  \includegraphics[width = \textwidth]{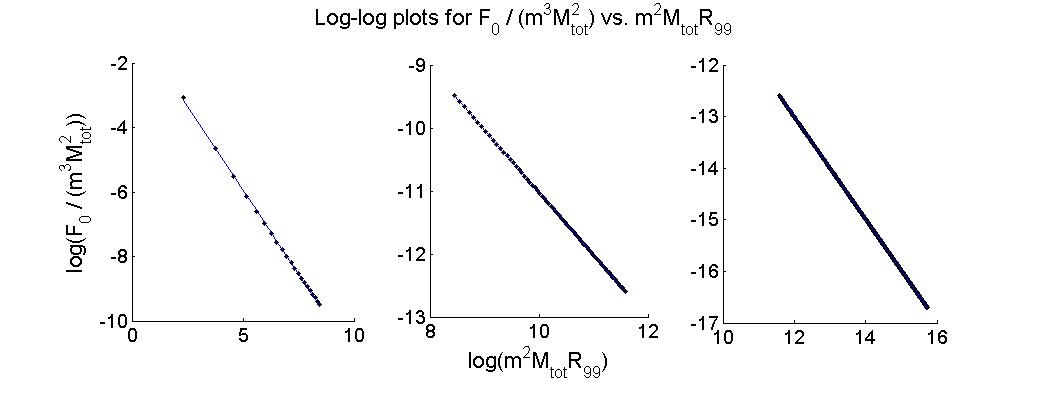}
\end{center}

\caption{Log-log plots of the quantity $F_{0}/\pnth{m^{3}(M_{\tot})^{2}}$ against the quantity $m^{2}M_{\tot}R_{99}$.  The scatter plot in each plot is the actual data computed for the static states, while the line is the best fitting line for the data.  The left plot is for the ground through $20^{\th}$ excited state regime, the center plot for the $20^{\th}$ through $100^{\th}$ excited state, and the right plot for the $100^{\th}$ through $800^{\th}$.  The slope and intercept for these lines as well as the correlation coefficient for the data can be found in Table~\ref{statsF0}.}

\label{plotsF0}

\end{figure}

\begin{table}[ht]

\begin{center}
  \begin{tabular}{c||c|c|c}
    State Range & (0 - 20) & (20 - 100) & (100 - 800)  \\
    \hline Slope ($s$) & -1.033 & -0.9932 &  -0.9907 \\
     Intercept & -0.7844 & -1.092 & -1.120  \\
     Exponentiated Intercept ($C$) & 0.4564 & 0.3356  & 0.3262 \\
     Correlation Coefficient & -0.9998 & -1.000 & -1.000
  \end{tabular}
\end{center}

\caption{The slope, intercept, and correlation coefficient of the log-log plot of the quantity $F_{0}/\pnth{m^{3}(M_{\tot})^{2}}$ against the quantity $m^{2}M_{\tot}R_{99}$.    The log-log plots these values come from are found in Figure~\ref{plotsF0}.}

\label{statsF0}

\end{table}

The initial value of the scalar field $F_{0}$ scales as $F$ does in equation~\eqref{scaling} and hence the $k$ and $\ell$ in equation~\eqref{parscale} are in this case $k=2$ and $\ell = 0$.  Thus for $F_{0}$ to be a power function of the defining parameters, we must have by equation~\eqref{refpower}
\begin{equation}
  F_{0} = C m^{3}(M_{\tot})^{2}\pnth{m^{2}M_{\tot}R_{99}}^{s}
\end{equation}
for some constants $C$ and $s$.  Hence here we compute the log-log plot of $F_{0}/\pnth{m^{3}(M_{\tot})^{2}}$ versus $m^{2}M_{\tot}R_{99}$.  The plot for each regime is displayed in Figure~\ref{plotsF0}, while we collect the slope and intercept of the resulting best fitting line in Table~\ref{statsF0}.  The values in Table~\ref{statsF0} suggest that $s \approx -1$ and hence
\begin{equation}
  F_{0} = \frac{CmM_{\tot}}{R_{99}},
\end{equation}
where the $C$ is given by the exponentiated intercept in Table~\ref{statsF0}.

\subsection{The Half-Mass Radius}

\begin{figure}[t]

\begin{center}
  \includegraphics[width = \textwidth]{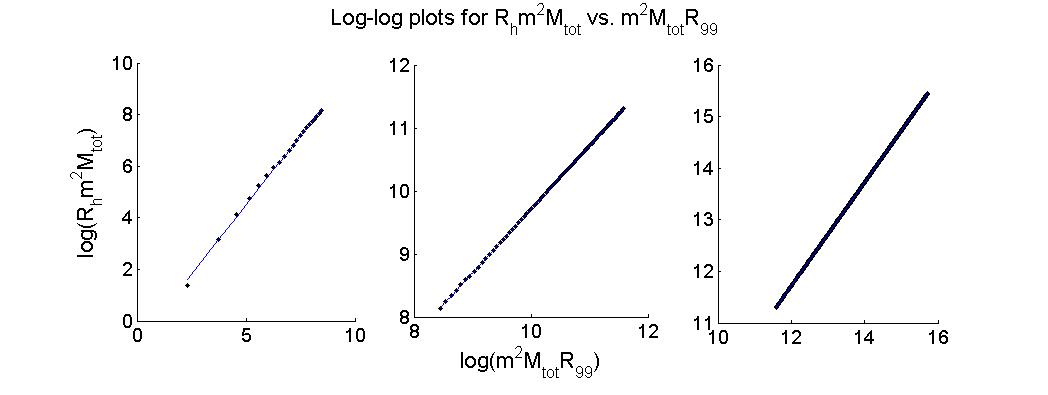}
\end{center}

\caption{Log-log plots of the quantity $R_{h}m^{2}M_{\tot}$ against the quantity $m^{2}M_{\tot}R_{99}$.  The scatter plot in each plot is the actual data computed for the static states, while the line is the best fitting line for the data.  The left plot is for the ground through $20^{\th}$ excited state regime, the center plot for the $20^{\th}$ through $100^{\th}$ excited state, and the right plot for the $100^{\th}$ through $800^{\th}$.  The slope and intercept for these lines as well as the correlation coefficient for the data can be found in Table~\ref{statsRh}.}

\label{plotsRh}

\end{figure}

\begin{table}[ht]

\begin{center}
  \begin{tabular}{c||c|c|c}
    State Range & (0 - 20) & (20 - 100) & (100 - 800)  \\
    \hline Slope ($s$) & 1.073  & 1.007 & 1.002  \\
     Intercept & -0.8665 & -0.3636 &  -0.2959 \\
     Exponentiated Intercept ($C$) & 0.4204 & 0.6951  & 0.7438 \\
     Correlation Coefficient & 0.9991 & 1.000 & 1.000
  \end{tabular}
\end{center}

\caption{The slope, intercept, and correlation coefficient of the log-log plot of the quantity $R_{h}m^{2}M_{\tot}$ against the quantity $m^{2}M_{\tot}R_{99}$.    The log-log plots these values come from are found in Figure~\ref{plotsRh}.}

\label{statsRh}

\end{table}

The half-mass radius $R_{h}$ being a radius will scale like radii in equation~\eqref{scaling}.  Hence for $R_{h}$, the $k$ and $\ell$ in equation~\eqref{parscale} is $k = -1$ and $\ell = -1$.  Thus if $R_{h}$ is a power function of the defining parameters, in order to be invariant under scaling, we must have by equation~\eqref{refpower},
\begin{equation}
  R_{h} = C m^{-2}(M_{\tot})^{-1}\pnth{m^{2}M_{\tot}R_{99}}^{s}
\end{equation}
for some constants $C$ and $s$.  Then we compute the log-log plot of the quantity $R_{h}m^{2}M_{\tot}$ versus the quantity $m^{2}M_{\tot}R_{99}$.  These plots can be found in Figure~\ref{plotsRh}.  The slope and intercept of the best fitting line of this plot can be found in Table~\ref{statsRh}.  The values in that table suggest that $s \approx 1$ and hence
\begin{equation}\label{Rhdep}
  R_{h} = CR_{99}.
\end{equation}
where $C$ is given by the exponentiated intercept in Table~\ref{statsRh}.

\subsection{The Decay Radius}

\begin{figure}[t]

\begin{center}
  \includegraphics[width = \textwidth]{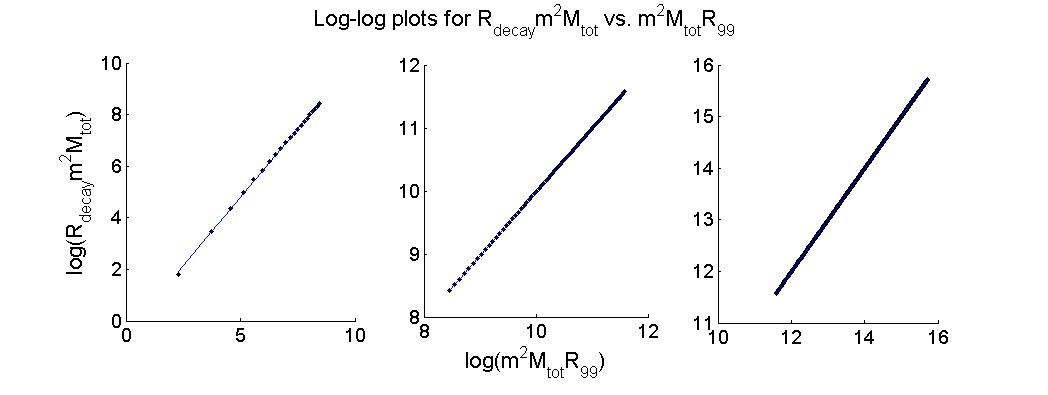}
\end{center}

\caption{Log-log plots of the quantity $R_{\DK}m^{2}M_{\tot}$ against the quantity $m^{2}M_{\tot}R_{99}$.  The scatter plot in each plot is the actual data computed for the static states, while the line is the best fitting line for the data.  The left plot is for the ground through $20^{\th}$ excited state regime, the center plot for the $20^{\th}$ through $100^{\th}$ excited state, and the right plot for the $100^{\th}$ through $800^{\th}$.  The slope and intercept for these lines as well as the correlation coefficient for the data can be found in Table~\ref{statsRDK}.}

\label{plotsRDK}

\end{figure}

\begin{table}[ht]

\begin{center}
  \begin{tabular}{c||c|c|c}
    State Range & (0 - 20) & (20 - 100) & (100 - 800)  \\
    \hline Slope ($s$) & 1.062 & 1.008 & 1.002  \\
     Intercept & -0.5225 & -0.09945 & -0.02490  \\
     Exponentiated Intercept ($C$) & 0.5930 & 0.9053 & 0.9754 \\
     Correlation Coefficient & 0.9997 & 1.000 & 1.000
  \end{tabular}
\end{center}

\caption{The slope, intercept, and correlation coefficient of the log-log plot of the quantity $R_{\DK}m^{2}M_{\tot}$ against the quantity $m^{2}M_{\tot}R_{99}$.    The log-log plots these values come from are found in Figure~\ref{plotsRDK}.}

\label{statsRDK}

\end{table}

Similar to the half-mass radius, the decay radius $R_{\DK}$ scales as equation~\eqref{parscale} for $k = -1$ and $\ell = -1$ and hence if $R_{\DK}$ is to be a power function of the defining parameters and remain invariant under scaling, by equation~\eqref{refpower}, we must have
\begin{equation}
  R_{\DK} = C m^{-2}(M_{\tot})^{-1}\pnth{m^{2}M_{\tot}R_{99}}^{s}
\end{equation}
for some constants $C$ and $s$.  Thus we compute the log-log plot of the quantity $R_{\DK}m^{2}M_{\tot}$ versus the quantity $m^{2}M_{\tot}R_{99}$.  These plots are found in Figure~\ref{plotsRDK}, while the slope and intercept of the best fitting line of this plot can be found in Table~\ref{statsRDK}.  The values in that table suggest that $s \approx 1$ and hence
\begin{equation}\label{RDKCorr}
  R_{\DK} = CR_{99},
\end{equation}
where $C$ is given by the exponentiated intercept in Table~\ref{statsRDK}.  This suggests that $R_{\DK}$ approaches $R_{99}$ for higher excited states.

\subsection{The Mass Contained Within the Decay Radius}

\begin{figure}[t]

\begin{center}
  \includegraphics[width = \textwidth]{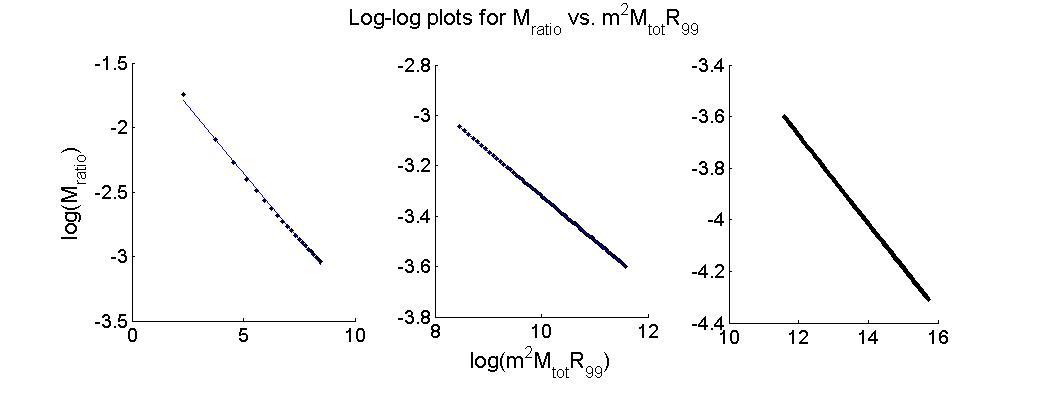}
\end{center}

\caption{Log-log plots of the quantity $M_{\ratio}$ against the quantity $m^{2}M_{\tot}R_{99}$.  The scatter plot in each plot is the actual data computed for the static states, while the line is the best fitting line for the data.  The left plot is for the ground through $20^{\th}$ excited state regime, the center plot for the $20^{\th}$ through $100^{\th}$ excited state, and the right plot for the $100^{\th}$ through $800^{\th}$.  The slope and intercept for these lines as well as the correlation coefficient for the data can be found in Table~\ref{statsMDK}.}

\label{plotsMDK}

\end{figure}

\begin{table}[ht]

\begin{center}
  \begin{tabular}{c||c|c|c}
    State Range & (0 - 20) & (20 - 100) & (100 - 800)  \\
    \hline Slope ($s$) & -0.2059  & -0.1770 & -0.1710  \\
     Intercept & -1.321 & -1.550 & -1.621  \\
     Exponentiated Intercept ($C$) & 0.2669 & 0.2123  & 0.1976 \\
     Correlation Coefficient & -0.9987 & -0.9999 & -1.000
  \end{tabular}
\end{center}

\caption{The slope, intercept, and correlation coefficient of the log-log plot of the quantity $M_{\ratio}$ against the quantity $m^{2}M_{\tot}R_{99}$.  The log-log plots these values come from are found in Figure~\ref{plotsMDK}.}

\label{statsMDK}

\end{table}

The value of $M_{\DK}$ and $M_{\tot}$ are so close to each other that the log-log plots for $M_{\DK}$ and the defining parameters give misleading results.  In order to focus in on the difference between $M_{\DK}$ and $M_{\tot}$, we consider the alternative quantity
\begin{equation}\label{massratio}
  M_{\ratio} = 1 - \frac{M_{\DK}}{M_{\tot}}
\end{equation}
and compute the dependency of this quantity on the defining parameters, $M_{\tot}$, $R_{99}$, and $m$.  Note that since $M_{\DK}$ and $M_{\tot}$ both scale the same way, the quantity $M_{\ratio}$ is dimensionless and hence invariant under the scalings in equation~\eqref{scaling}.  This implies that for this quantity, $k=0$ and $\ell = 0$ in equation~\eqref{parscale}.  Thus if $M_{\ratio}$ is a power function of the defining parameters and remains invariant under scaling, by equation~\eqref{refpower} it must be that
\begin{equation}
  M_{\ratio} = C \pnth{m^{2}M_{\tot}R_{99}}^{s}
\end{equation}
for some constants $C$ and $s$.  Thus we compute the log-log plot of $M_{\ratio}$ versus $m^{2} M_{\tot}R_{99}$.  These plots are displayed in Figure~\ref{plotsMDK}.  Table~\ref{statsMDK} contains the slope and intercept of the best fitting line to this plot and suggests that in this case $s \approx -0.175 = -7/40$.  This yields that approximately
\begin{subequations}
\begin{align}
  M_{\ratio} = 1 - \frac{M_{\DK}}{M_{\tot}} &= \frac{C}{(m^{2}R_{99}M_{\tot})^{(7/40)}} \\
  \label{massdecay} M_{\DK} &= \pnth{1 -  \frac{C}{(m^{2}R_{99}M_{\tot})^{(7/40)}}}M_{\tot}  
\end{align}
\end{subequations}
where the constant $C$ is the exponentiated intercept $C$ in Table~\ref{statsMDK}.  Note that since $C$, $m$, $R_{99}$, and $M_{\tot}$ are all positive, equation~\eqref{massdecay} implies that $M_{\DK} < M_{\tot}$ which is obviously a neccessary requirement and serves as a reality check to this equation.  Also note that by equation~\eqref{nCorr}, equation~\eqref{massdecay} can be rewritten as
\begin{equation}\label{MbyN}
  M_{\DK} = \pnth{1 - \frac{\hat{C}}{(n+1)^{(7/20)}}}M_{\tot}
\end{equation}
where $\hat{C}$ is ratio of the $C$ from equation~\eqref{massdecay} and the $C$ from equation~\eqref{nCorr} raised to the power of $7/20$.  Thus if this approximation continues to hold as expected for all static states, it is clear from equation~\eqref{MbyN} that as $n \to \infty$, $M_{\DK} \to M_{\tot}$.

\subsection{The Value of the Scalar Field at the Decay Radius}

\begin{figure}[t]

\begin{center}
  \includegraphics[width = \textwidth]{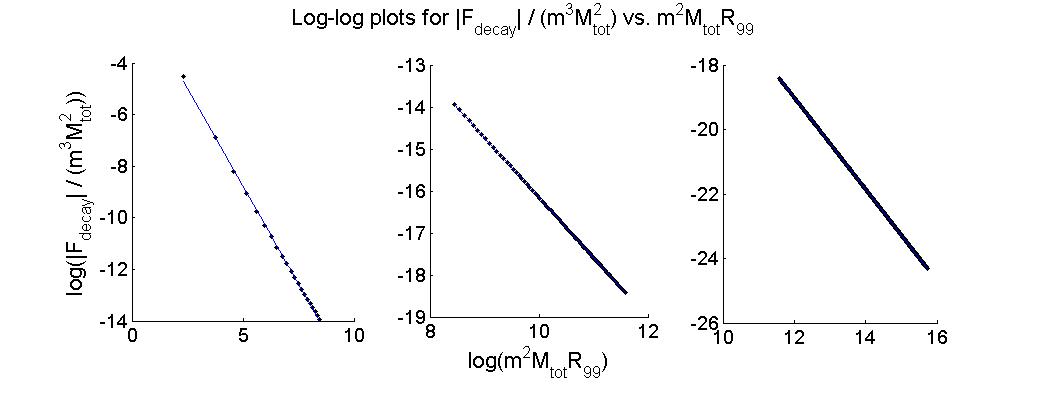}
\end{center}

\caption{Log-log plots of the quantity $\abs{F_{\DK}}/\pnth{m^{3}(M_{\tot})^{2}}$ against the quantity $m^{2}M_{\tot}R_{99}$.  The scatter plot in each plot is the actual data computed for the static states, while the line is the best fitting line for the data.  The left plot is for the ground through $20^{\th}$ excited state regime, the center plot for the $20^{\th}$ through $100^{\th}$ excited state, and the right plot for the $100^{\th}$ through $800^{\th}$.  The slope and intercept for these lines as well as the correlation coefficient for the data can be found in Table~\ref{statsFDK}.}

\label{plotsFDK}

\end{figure}

\begin{table}[ht]

\begin{center}
  \begin{tabular}{c||c|c|c}
    State Range & (0 - 20) & (20 - 100) & (100 - 800)  \\
    \hline Slope ($s$) & -1.509 & -1.430  & -1.420  \\
     Intercept & -1.238 & -1.851 & -1.972  \\
     Exponentiated Intercept ($C$) & 0.2900  & 0.1571  & 0.1392 \\
     Correlation Coefficient & -0.9997 & -1.000 & -1.000
  \end{tabular}
\end{center}

\caption{The slope, intercept, and correlation coefficient of the log-log plot of the quantity $\abs{F_{\DK}}/\pnth{m^{3}(M_{\tot})^{2}}$ against the quantity $m^{2}M_{\tot}R_{99}$.  The log-log plots these values come from are found in Figure~\ref{plotsFDK}.}

\label{statsFDK}

\end{table}

Since the value of the scalar field $F$ at $r = R_{\DK}$ can be either positive or negative depending on whether or not it is computed for an even or odd state respectively, we computed how $\abs{F(R_{\DK})}~=~\abs{F_{\DK}}$ depends on $M_{\tot}$, $R_{99}$, and $m$.  This value will scale like $F$ does in equation~\eqref{scaling} and hence $k=2$ and $\ell = 0$ for this quantity in equation~\eqref{parscale}.  Thus if this quantity is a power function of the defining parameters that is invariant under the scalings, by equation~\eqref{refpower}, it must be that
\begin{equation}\label{FDKraw}
  \abs{F_{\DK}} = C m^{3}(M_{\tot})^{2}\pnth{m^{2}M_{\tot}R_{99}}^{s}
\end{equation}
for some constants $C$ and $s$.  We then compute the log-log plot of $\abs{F_{\DK}}/\pnth{m^{3}(M_{\tot})^{2}}$ versus $m^{2}M_{\tot}R_{99}$.  The plots for each regime are displayed in Figure~\ref{plotsFDK}.  Table~\ref{statsFDK} details the resulting slope and intercept of the best fitting line to this plot.  The slopes here suggest that $s \approx -1.420$.  To guide our guess at what fraction this might be close to, we refer to Goetz's paper \cite{Goetz15}.  Recall that for large values of the number of nodes $n$, the value of $C$ in equation~\eqref{RDKCorr} suggested that $R_{\DK} \approx R_{99}$.  In \cite{Goetz15}, Goetz presented a theoretical argument that for large values of the number of nodes $n$, equation~\eqref{FDKraw} (with $R_{99}$ replaced with $R_{\DK}$) should hold for $s = -17/12$.  And indeed, this is consistent with the observation here as $-17/12 \approx -1.417$.  As such, we use the value of $-17/12$ for $s$ here and obtain approximately
\begin{equation}\label{FDKCorr}
  \abs{F_{\DK}} = \frac{C m^{(1/6)}(M_{\tot})^{(7/12)}}{(R_{99})^{(17/12)}}
\end{equation}
where the value of $C$ is given by the exponentiated intercept in Table~\ref{statsFDK}.  Equation~\eqref{FDKCorr} is precisely the relationship conjectured by Goetz in \cite{Goetz15} and in fact the value of $C$ Goetz obtains is approximately $0.133$ which is consistent with our observation of a value of $0.1392$ in the $100^{\th}$ to $800^{\th}$ excited state regime.

\subsection{The Maximum Circular Orbital Velocity}

\begin{figure}[t]

\begin{center}
  \includegraphics[width = \textwidth]{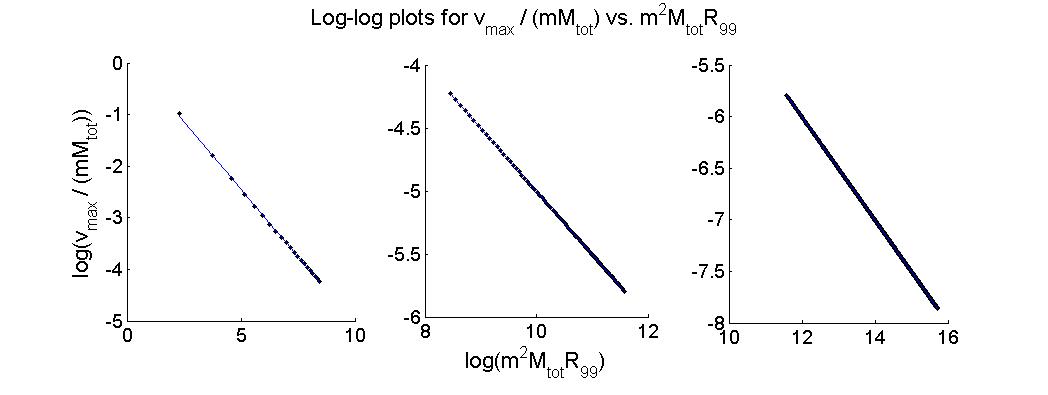}
\end{center}

\caption{Log-log plots of the quantity $v_{\max}/(mM_{\tot})$ against the quantity $m^{2}M_{\tot}R_{99}$.  The scatter plot in each plot is the actual data computed for the static states, while the line is the best fitting line for the data.  The left plot is for the ground through $20^{\th}$ excited state regime, the center plot for the $20^{\th}$ through $100^{\th}$ excited state, and the right plot for the $100^{\th}$ through $800^{\th}$.  The slope and intercept for these lines as well as the correlation coefficient for the data can be found in Table~\ref{statsVmax}.}

\label{plotsVmax}

\end{figure}

\begin{table}[ht]

\begin{center}
  \begin{tabular}{c||c|c|c}
    State Range & (0 - 20) & (20 - 100) & (100 - 800)  \\
    \hline Slope ($s$) & -0.5209 & -0.5004 & -0.4994  \\
     Intercept & 0.1573 & $1.237 \times 10^{-4}$ & -0.01138  \\
     Exponentiated Intercept ($C$) & 1.1704 & 1.000  &  0.9887 \\
     Correlation Coefficient & -0.9998 & -1.000 & -1.000
  \end{tabular}
\end{center}

\caption{The slope, intercept, and correlation coefficient of the log-log plot of the quantity $v_{\max}/(mM_{\tot})$ against the quantity $m^{2}M_{\tot}R_{99}$.  The log-log plots these values come from are found in Figure~\ref{plotsVmax}.}

\label{statsVmax}

\end{table}

The value of $v_{\max}$ is computed as the maximum of the function
\begin{equation}\label{vdef}
  v(r) = \sqrt{\frac{M(r)}{r}}
\end{equation}
and so should scale as
\begin{equation}
  \bar{v} = \sqrt{\frac{\bar{M}}{\bar{r}}} = \sqrt{\frac{\al \be^{-3}M}{\al^{-1}\be^{-1}r}} = \al\be^{-1}\sqrt{\frac{M}{r}} = \al\be^{-1}v
\end{equation}
and hence for this quantity $k = 1$ and $\ell = -1$ in equation~\eqref{parscale}.  Thus if $v_{\max}$ is to be a power function of the defining parameters and invariant under this scaling, by equation~\eqref{refpower}, we must have that
\begin{equation}
  v_{\max} = C m M_{\tot} \pnth{m^{2}M_{\tot}R_{99}}^{s}
\end{equation}
for constants $C$ and $s$.  Thus we compute the log-log plot of $v_{\max}/(mM_{\tot})$ versus $m^{2}M_{\tot}R_{99}$.  These plots are displayed in Figure~\ref{plotsVmax}.  In Table~\ref{statsVmax}, we collect the slope and intercept of the best fitting line of this plot.  This shows that in this case $s \approx -1/2$, which suggests that approximately
\begin{equation}\label{vmaxdep}
  v_{\max} = C\sqrt{\frac{M_{\tot}}{R_{99}}}
\end{equation}
where $C$ is given by the exponentiated intercept in Table~\ref{statsVmax}.  This relationship is not that surprising given that this value is the maximum of the function defined in equation~\eqref{vdef}.

\subsection{Fixed Excited State Values of the Defining Parameters}\label{mGSsec}

It was observed in section~\ref{minvals} that for fixed values of any pair of parameters from $m$, $M_{\tot}$ and $R_{99}$, the minimum value of the third parameter is obtained by the ground state solution.  Thus to determine the minimum values of the parameter, we need only describe how these values are related for a ground state.  This is quite simple due to the scalings in equation~\eqref{scaling} and in fact we can say even more.  This is contained in the following proposition.

\begin{prop}\label{defparprop}
  For a specified ground or excited state (i.e. for a fixed $n$), we have that
    \begin{equation}
      m = \frac{C}{\sqrt{M_{\tot}R_{99}}}
    \end{equation}
  or equivalently
    \begin{equation}\label{hyperbola}
      M_{\tot}R_{99} = \frac{C^{2}}{m^{2}}
    \end{equation}
    for some constant $C$ which depends only on $n$.
\end{prop}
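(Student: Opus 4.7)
The plan is to derive this as a direct consequence of Lemma~\ref{defparlem} together with the fact, established earlier in the paper, that for a fixed value of $n$ every static state is obtained by applying the scaling transformation \eqref{scaling} to a single reference solution. Fix $n$ and pick any one static state with this node count (for instance one of the solutions Goetz computed at $m=100$, $\om=99.999$); call its parameter values $m_0$, $(M_{\tot})_0$, $(R_{99})_0$. Then for this reference solution the quantity
\begin{equation}
    C^2 := m_0^{\,2}\,(M_{\tot})_0\,(R_{99})_0
\end{equation}
is a positive real number depending only on $n$.

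Next I would invoke the parametrization of states with $n$ nodes: any other static state with the same number of nodes arises from this reference via some choice of $\al,\be>0$ through the scalings~\eqref{scaling}, producing barred quantities $\bar m$, $\bar M_{\tot}$, $\bar R_{99}$. By Lemma~\ref{defparlem}, the combination $m^{2}M_{\tot}R_{99}$ is scale invariant, hence
\begin{equation}
    \bar m^{\,2}\,\bar M_{\tot}\,\bar R_{99} \;=\; m_0^{\,2}\,(M_{\tot})_0\,(R_{99})_0 \;=\; C^{2}.
\end{equation}
Since this holds for every static state with $n$ nodes, the relation $m^{2}M_{\tot}R_{99}=C^{2}$ holds throughout the whole $n$-fixed family. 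Taking positive square roots (all three parameters are positive by definition) yields $m\sqrt{M_{\tot}R_{99}}=C$, which rearranges to both displayed formulas.

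I do not anticipate a serious obstacle: the proposition is essentially a restatement of scale invariance once one notices that fixing $n$ collapses the two-parameter scaling family to a level set of $m^{2}M_{\tot}R_{99}$. The only point worth being careful about is the implicit claim that $C$ depends \emph{only} on $n$ and not on the particular representative chosen from the $n$-fixed family; this is exactly what scale invariance of $m^{2}M_{\tot}R_{99}$ guarantees, so the proof is complete once Lemma~\ref{defparlem} is cited.
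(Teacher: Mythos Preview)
Your proposal is correct and follows essentially the same approach as the paper's proof: both invoke Lemma~\ref{defparlem} to conclude that the scale-invariant quantity $m^{2}M_{\tot}R_{99}$ is constant across the $n$-fixed family, then take a square root. Your version is slightly more explicit about fixing a reference solution and noting that every $n$-state is a scaling of it, but this is exactly what the paper means by ``for a fixed $n$, that quantity equals a constant no matter how one scales the solution.''
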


\begin{proof}
  If a quantity is scale invariant under the scalings in equation~\eqref{scaling}, then for a fixed $n$, that quantity equals a constant no matter how one scales the solution.  By Lemma~\ref{defparlem}, $m^{2}M_{\tot}R_{99}$ is scale invariant (and positive) and hence for a fixed $n$,
  \begin{align}
    \notag m^{2}M_{\tot}R_{99} &= C^{\ast} \\
    m &= \frac{C}{\sqrt{M_{\tot}R_{99}}}
\end{align}
where $C = \sqrt{C^{\ast}}$.
\end{proof}

In particular, Proposition~\ref{defparprop} is true for the ground state which yields the minimum value of any one of $m$, $M_{\tot}$, or $R_{99}$ for fixed values of the other two.  Computing the quantity $m\sqrt{M_{\tot}R_{99}}$ for the ground state in the standard set of 800 states that we have been scaling, yields that 
\begin{equation}
  C = 3.155.
\end{equation}
Thus the minimum value of any of these parameters for fixed values of the other two parameters is given approximately by the equation
\begin{equation}\label{mGSdep}
  m = \frac{3.155}{\sqrt{M_{\tot}R_{99}}}.
\end{equation}

Note that equation~\eqref{hyperbola} implies that the mass profile of a static state lies along a hyperbola of constant $m$.  This observation was already made by the author \cite{Parry12-3} and was used by Bray and Parry \cite{Parry12-2} to describe a procedure for finding an upper bound on $m$ in the wave dark matter model.  Note that in \cite{Parry12-3}, the Einstein-Klein-Gordon equations were used instead of the Poisson-Schr\"{o}dinger equations.  Since all of the observations in that paper were made in the low field limit, this is entirely consistent with the observation here since the Poisson-Schr\"{o}dinger equations are the low-field limit version of the Einstein-Klein-Gordon equations.

To complete this consistency comparison, we compute the constant here that is comparable to the observation made by Parry.  In \cite{Parry12-2}, Parry compares the value of $m$ with the value of the total mass $M_{\tot}$ and the half mass radius, $R_{h}$.  We have two of those parameters in equation~\eqref{hyperbola}, but need to substitute in equation~\eqref{Rhdep}.  Since all of Parry's observations were in the regime of ground state to $20^{\th}$ excited state, we will use that value of the constant in equation~\eqref{Rhdep}.  This yields the equation
\begin{equation}
  M_{\tot}R_{h} = \frac{C^{\ast}}{m^{2}}
\end{equation}
where, by equations~\eqref{Rhdep} and~\eqref{mGSdep}, $C^{\ast}$ is the product of $3.155^{2}$ and the value of $C$ in equation~\eqref{Rhdep} for the ground through $20^{\th}$ excited state regime.  This yields that $C^{\ast} \approx 4.185$.  The value of $C^{\ast}$ found in \cite{Parry12-3} is about $3.86$, so these are relatively close.  The small discrepancy is due to the fact that in substituting in $R_{h}$, we used a fit for the ground through $20^{\th}$ excited state while the constant from \cite{Parry12-3} only dealt with the ground state.  The value of $R_{h}/R_{99}$ for the ground state individually is actually only about $0.3945$ which yields a $C^{\ast}$ value of $3.926$ which is on the edge of the error bars listed in \cite{Parry12-3} for the constant there corresponding to $C^{\ast}$.  Thus our observations here are consistent with those in \cite{Parry12-3}.

\section{Summary and Outlook}\label{sumoutlook}

In this section, for convenience, we relist the relationships found in the previous section.  Then we describe a possible application of this work.

\subsection{Summary of Equations}\label{sumeq}

Here we collect all of the observed relationships of the different parameters describing the spherically symmetric static state solutions of the Poisson-Schr\"{o}dinger equations.  We note that these equations are approximate and they appear to work best for higher excited states (e.g. the $100^{\th}$~-~$800^{\th}$ excited state regime).
\begin{subequations}\label{relations}
\begin{align}
  n &= C m \sqrt{M_{\tot}R_{99}}           &               \om &= m\pnth{1 - \frac{CM_{\tot}}{R_{99}}} \\
  V_{0} &= \frac{CM_{\tot}}{R_{99}}     &               F_{0} &= \frac{CmM_{\tot}}{R_{99}} \\
  R_{h} &= CR_{99}			  &	        R_{\DK} &= CR_{99} \\
  M_{\DK} &= \pnth{1 -  \frac{C}{(m^{2}R_{99}M_{\tot})^{(7/40)}}}M_{\tot}     
                                                                  &      \abs{F_{\DK}} &= \frac{C m^{(1/6)}(M_{\tot})^{(7/12)}}{(R_{99})^{(17/12)}} \\
  v_{\max} &= C\sqrt{\frac{M_{\tot}}{R_{99}}}  & m^{\GS} &= \frac{3.155}{\sqrt{M_{\tot}^{\GS}R_{99}^{\GS}}}
\end{align}
\end{subequations}
where $m^{\GS}$, $M_{\tot}^{\GS}$, and $R_{99}^{\GS}$ are the values of these parameters for any ground state and the constants $C$ are collected in Table~\ref{constants}.

\begin{table}[t]

\begin{center}

\begin{tabular}{c|c|c|c|c|c|c|c|c|c}
  State Range & $n$  &  $\om$  &  $V_{0}$  &  $F_{0}$  &  $R_{h}$  &  $R_{\DK}$  &  $M_{\DK}$  &  $\abs{F_{\DK}}$  &  $v_{\max}$ \\
  \hline (0 - 20) & 0.3121 & 1.639 & -2.850 & 0.4564 & 0.4204 & 0.5930 & 0.2669 & 0.2900 & 1.1704 \\
  (20 - 100) & 0.3036  & 1.101 & -2.450 & 0.3356 & 0.6951 & 0.9053 & 0.2123 & 0.1571 & 1.000 \\
  (100 - 800) & 0.3070  & 1.025  & -2.565 & 0.3262 & 0.7438 & 0.9754 & 0.1976 & 0.1392 & 0.9887 
\end{tabular}

\end{center}

\caption{Constant values for $C$ in equation~\eqref{relations}.}
\label{constants}

\end{table}

\subsection{Outlook}\label{outlook}

An interesting open problem with regard to the proposed relationships in equation~\eqref{relations} is to verify these suggestions analytically.  All the relationships in equation~\eqref{relations} are educated guesses based off the numerical data in this paper, but we do not, in this paper, attempt to prove these relationships analytically.  Goetz attempted some justification for Equation~\eqref{FDKCorr} in \cite{Goetz15}, but an actual proof of these relationships would require a more thorough investigation.  Thus one could take equation~\eqref{relations} as a conjecture and work to prove that these relations are true or correct them, if necessary.

A possible application of these observations utilizes equation~\eqref{mGSdep}.  Since $m$ in equation~\eqref{mGSdep} represents the minimum value of $m$ for given values of $M_{\tot}$ and $R_{99}$, it is possible to use~\eqref{mGSdep} to get a lower bound for $m$ in the scalar field dark matter model.  If, as is assumed in \cites{Parry12-2,Goetz14,Goetz15} and several other references, the dark matter halo can be modeled by a static state scalar field, then a lower bound on $m$ could be found by taking a survey of as many dark matter halos as information about their total mass and radius is known or at least approximated, then using~\eqref{mGSdep} to compute $m$ for the ground state in each case, and finally taking the largest of these values as the lower bound for $m$.  The largest value of these minimums is the right value since any value of $m$ smaller than this would not produce a static state solution for at least one halo.

Another possible application has to do with B\"{a}cklund transformations.  Equation~\eqref{nCorr} suggests a relationship between the number of nodes in a state and the standard set of defining parameters of the system.  For soliton solutions to the sine-Gordon equations, B\"{a}cklund transformations successively applied to the 1-soliton solution can be used to find an $n$-soliton solution \cites{Rogers02}.  We suggest the observations in this paper might be used to discover a B\"{a}cklund-like transformation for the standing wave solutions of the Poisson-Schr\"{o}dinger equations that would relate solutions for the ground state to an $n^{\text{th}}$ excited state.  Using different techniques, a similar course of action is suggested by Fuentes et al.\ in \cites{Fuentes17}. 

Due to the above applications and open problems and the vast number of other applications of the spherically symmetric Poisson-Schr\"{o}dinger equations currently studied, the results of the paper should be useful as they allow researchers to quickly compute approximate values about static state solutions to the Poisson-Schr\"{o}dinger equations without needing to take the time to solve for them numerically themselves.

\section{Acknowledgements}

The author gratefully acknowledges the invaluable contributions of and discussions with Andrew S. Goetz in the preparation of this paper.

\bibliographystyle{amsalpha}
\bibliography{./References}

\end{document}